\documentclass{llncs}
\usepackage{amsmath}
\usepackage{algorithm}
\usepackage{graphicx}
\usepackage[noend]{algpseudocode}
\begin{document}
\newtheorem{thm}{Theorem}
  \title{Block-Level Parallelism in Parsing Block Structured Languages}
  \author{Abhinav Jangda}
  \institute{Indian Institute of Technology (BHU), Varanasi \email{abhinav.student.apm11@iitbhu.ac.in}}
  
\maketitle
  \begin{abstract}
Software’s source code is becoming large and complex. Compilation of large base code is a time consuming process. Parallel compilation of code will help in reducing the time complexity. Parsing is one of the phases in compiler in which significant amount of time of compilation is spent. Techniques have already been developed to extract the parallelism available in parser. Current LR(k) parallel parsing techniques either face
difficulty in creating Abstract Syntax Tree or requires modification in the grammar or are specific to less expressive grammars. Most of the programming languages like C, ALGOL are block-structured, and in most language’s grammars the grammar of different blocks is independent, allowing different blocks to be parsed in parallel. We are proposing a block level parallel parser derived from Incremental Jump Shift Reduce
Parser by \cite{degano}. Block Parallelized Parser (BPP) can even work as a block parallel incremental parser. We define a set of Incremental Categories and create the partitions of a grammar based on a rule. When parser reaches the start of the block symbol it will check whether the current block is related to any incremental category. If block parallel parser find the incremental category for it, parser will parse the block in parallel. Block
parallel parser is developed for LR(1) grammar. Without making major changes in Shift Reduce (SR) LR(1) parsing algorithm, block parallel parser can create an Abstract Syntax tree easily. We believe this parser can be easily extended to LR (k) grammars and also be converted to an LALR (1) parser. We implemented BPP and SR LR(1) parsing algorithm for C Programming Language. We evaluated performance of both techniques by parsing 10 random files from Linux Kernel source. BPP showed 28\% and 52\% improvement in the case of including header files and excluding header files respectively.
\end{abstract}
\section{Introduction}
Multi core chip architectures are emerging as feasible solution to effectively utilizing the ever growing number of chip. Multi-core chip depends on success in system software technology (compiler and runtime system), in order to have thread level parallelism and utilizing on-chip concurrency. With multi-core processors, additional speedups can be achieved by the use of parallelism in data-independent tasks. There is a gradual shift towards making current algorithms and design into parallel algorithms. It is rather difficult to achieve lock free and low cache contention parallel algorithms.\\
\indent In 70’s papers appeared ideas on parallel compilation of programming languages and parallel execution of programs were expected. In those papers discussions on parallel lexical analysis, syntactic analysis and code generation were done. With VLSI applications, prominent increase in research on parallel compilation is observed.\\
\indent A compiler contains different phases: lexical analyzer, syntactic analyzer, semantic analyzer and code generator. Parsing or syntax analysis is the phase of compiler which analyses the program code according to the language. After analysis, it converts the code into another formal representation which will act as input for succeeding phases of compiler.\\
\indent Complexity of software source code is increasing. An effort to compile large code base is very time consumable. \cite{aho} describes two types of parsers: Top Down and Bottom Up Parsers. Top Down parsers have less power as compared to Bottom Up Parser. LR (k), SLR (k) and LALR (1) are types of Bottom Up Parsers. With more power Bottom Up Parsers also requires more space and more time in parsing a string as compared to Top Down parsers. Most of the compiler compilers like Yacc \cite{yacc} and Bison \cite{bison} creates LR (1) parsers and compilers like clang \cite{clang}, Mono C\# Compiler \cite{mono} etc. uses LR(1) parsers. So, it is evident that programming languages can be represented easily by LR (1) languages.\\
\indent Parsing is very time consuming phase of compiler. Parsing different files in parallel is not enough. As programming languages like C and C++ can includes different files (using \#include) in a single file which results in generation of very long file. If we can parallel the parsing phase of single file, it will give performance benefits in compiling the source code. Many significant techniques are already proposed for making parallel parsers (\cite{clarke}, \cite{cohen}, \cite{mickunas}, \cite{fischer}, \cite{ligett}). A parallel parsing for programming language is given by \cite{khanna}.\\
\indent A block is a section of code which is grouped together. In a language, a block may contain class definition, member or method declaration. Another block could be a block of statements also called compound statement. This block is usually associated with a function code or if statement or loop. Programming Languages such as C, C++, Java, Python use the concept of blocks heavily. One of most important property of parsing blocks is that they all are independent of each other i.e. each block can be parsed independently of other block. So, we could parse many blocks in a parallel fashion. In this paper, we will propose a technique to parse various blocks of the code in parallel. It can also work as a block parallel incremental parser. Our parser is termed as Block Parallelized Parser (BPP, for short).\\
\indent Our technique of parallel parsing is based on incremental parsing. An incremental parser is the one that parse only those portions of a program that have been modified. Whereas an ordinary parser must process the entire program when it is modified. An incremental parser takes only the known set of changes done in a source file and updates its internal representation of source file which may be an Abstract Syntax Tree. By building upon the previously parsed files, the incremental parser avoids the wasteful re-parsing of entire source file where most of the cod remains unchanged.\\
\indent BPP is based on the properties that an incremental parser can parse any part of a source code without the need of parsing the whole source code and different blocks in a source code can be parsed independently of other blocks. In BPP these parts are blocks in a source code. Using the property of incremental parser, BPP parse each of the blocks independently of other blocks. Each of these blocks are parsed in their own thread. It can be easily seen that BPP follows a divide and conquer approach. It divides the source into different blocks, parse each of them in parallel and at the end conquer these blocks. In our scheme the conquer step does nothing except waiting for all the BPP Threads to complete their operations.\\
\indent There have been many works on incremental parsing [Incremental Parsing References]. We choose to extend on the works of Incremental Jump Shift Reduce parser of \cite{degano}. BPP is derived from Incremental Jump Shift Reduce Parser. In \cite{degano}, authors defined Incremental Jump Shift reduce parser for SLR (1) languages only. However, we decided to extend this parser to accept LR(1) language because LR(1) languages has more power than SLR (1) and nearly all programming languages can be defined in the form of LR(1) grammars. We define the incremental categories to be a statement containing a block like class definition or function definition or if statement or for loop statement. Then, we give a notion of First Non-Terminal symbols of a Non-Terminal symbol. We used this
notion to create partitions of a grammar such that a partition
includes an incremental category and its First Non-Terminals. We
observed that this scheme gives us a very interesting property in
Incremental Jump Shift Reduce parser. We used this property to
create our Block Parallelized Parser.\\
\indent Whenever a start of the block symbol is encountered the parser will first check whether the current block is related to any incremental category and can it be parsed independently. If BPP is able to find the incremental category for it, BPP will start parsing the block in parallel. In this paper we developed this BPP for LR(1) languages but we believe it can be easily extended to LR(k) or can be easily converted to
LALR (1) or SLR (1) grammars. We also see that no major changes were done to the current LR(1) parsing algorithm and hence, it should be easy to create an Abstract Syntax Tree. This parser can also work as an incremental parallel parser which can parse different blocks in parallel. Moreover, it could be seen that there is no requirement of any Thread Synchronization to communicate between different threads of BPP each of which is parsing a block in parallel. This is because no two blocks are related in any way for the purpose of parsing.\\
\indent We compiled C\# implementation using Mono C\# Compiler 3.12.1 and executed the implementation using Mono JIT Compiler 3.12.1 on machine running Fedora 21 with Linux Kernel 3.19.3 with 6 GB RAM and Intel Core i7-3610 CPU with HyperThreading enabled. We found out that our technique showed 28\% performance improvement in the case of including header files and 52\% performance improvement in the case of excluding header files.\\
\indent The following paper is designed as follows. Section 2 shows some previous work done in parallel parsing. Section 3 and 4 provides the terminology we will use and the background required to understand our technique. In Section 5 we will extend Incremental Jump Shift Reduce parser to accept LR(1) grammars. In Section 6, we introduced the concept of First Non Terminals of a non terminal. In Section 7 we will use this concept to create partitions of the grammar. We also showed that by creating partitions using this concept we get a very interesting property. This property would be used by BPP. We have generalized this property in a theorem and also provided a proof for it. In Section 8 we presents our Block Parallelized Parser and its parsing algorithm. In Section 9 we will compare our algorithm with previous work. Section 10 shows our evaluation and results. In Section 11 and Section 12 we complete our document with conclusion and related work. 

\section{Related Work}
A lot of previous work has been done in Parallel Parsing of LR (1)
and Context Free Languages. The most recent work done by \cite{clarke} in
parallel parsing of LR(1) is an extension of an LR substring parser
for Bounded Context Languages (developed by Cormack) for
Parallel environment. \cite{cormack} provided a substring parser for Bounded
Context-LR Grammars and Simple Bounded Context-LR
Grammars. \cite{clarke} distributes the work of parsing the substrings of a
language over different processors. The work was extended to
different processors in a Balanced Binary Tree Fashion and
achieved O(log n) time complexity of parsing. But constructing a
Bounded Context LR Grammar for a programming language is
also difficult. C++ is one of the programming languages which
cannot be parsed by LR (1) parsing \cite{willink} so creating a Bounded
Context Grammar is out of question here.\\
\indent Parallel and distributed compilation schemes can be divided
into two broad categories, functional decomposition and data
decomposition. \cite{neal} and \cite{essouki} talks about distributed compilation
using a scheme based on functional decomposition. Functional
decomposition scheme divides different phases of compiler: lexer,
parser, semantic analyzer into functional component and running
each of them on separate processors like an instruction pipeline
fashion. The data decomposition scheme divide the input into
sections of equal length and parse them in parallel. BPP is data
decomposition scheme which parallel the parser by divide and
conquer approach. The data decomposition scheme was developed
by \cite{cohen}, \cite{schell}, \cite{fischer}, \cite{ligett}. These schemes are parsing LR (k) in
parallel. They divide the input into sections of equal length and
then parse them in parallel. \cite{mickunas}, \cite{ligett}, \cite{cohen} describes asynchronous
algorithms while \cite{fischer} develops a synchronous algorithm. \cite{ligett}
develops a parallel LR parser algorithm using the error recovery
algorithm of \cite{pennello}.\\
\indent \cite{neal} has developed an Incremental Parallel Compiler which
could be used in Interactive Programming Environment and he
developed an Incremental Parallel Parser also. \cite{bernardy} improves upon
the Divide and Conquer Parsing technique developed by \cite{valiant}.
They show that while the conquer step of algorithm in \cite{valiant} is
\begin{math}
  O(n^3)
\end{math} but under certain conditions it improves to \begin{math}
O(log^3 n)\end{math}
.\\
\indent \cite{khanna} describes a grammar partitioning scheme which would
help in parsing the language in parallel. In \cite{khanna} a type of
Statement Level Parallelism has been developed. The grammar is
divided into n different sub-grammars corresponding to n subsets
of the language which will be handled by each sub-compiler. For
each n sub grammars required to generate parse tables (using
parser generator) along with driver routine constitute a parser for
sub-compiler. For each sub-compiler, a requirement of modified
scanner is there which recognizes subset of the language. The
technique described in \cite{khanna} requires a lot of modification to the
Lexical Analyzer. A separate Lexical Analyzer has to be
developed for one type of language. The parser of \cite{khanna} requires
automatic tools for its implementation.\\
\indent In all the above described techniques constructing Abstract
Syntax Tree for a Block Structured Language is difficult. As
Blocks in a Block Structured Language are independent on each
other, so they can be parsed independently. Moreover this scheme
would not involve Inter-Thread Communication before a thread
completes its Parsing of Blocks. Hence, no shared memory
synchronization methods are required to coordinate between
different threads. It could be easily seen that the creation of an
Abstract Syntax Tree is also very easy. With all these required
things in mind we have developed Block Parallelized Parser for
LR (1) languages.
\section{Terminology}
We assume the notation for Context Free Grammar is represented by \begin{math}G = (N, T, S, P)\end{math} where \textit{N} is set of non-terminals, \textit{T} is set of terminals, \textit{S} is start symbol and \textit{P} is set of productions of the grammar. The language generated by \textit{G} is given as
\centerline{\begin{math}
  L (G) = \left\lbrace \omega \in T^{\ast} | S \overset{\ast}{\Rightarrow} \omega \right\rbrace 
\end{math}}
We will use the following conventions.\\
\centerline{\begin{math}
S, A, B, ... \in N
\end{math}}
\centerline{\begin{math}
a, b, ... \in T
\end{math}}
\centerline{\begin{math}
..., w, x \in T^{\ast}
\end{math}}
\centerline{\begin{math}
X, Y \in N \cup T
\end{math}}
\centerline{\begin{math}
\alpha, \beta, \gamma, ... \in (N \cup T)^{\ast}
\end{math}}
\indent Given a grammar \textit{G}, we represent its augmented grammar as \begin{math}G' = (N', T', S', P')\end{math}, where\\
\centerline{\begin{math}
  N' = N \cup \left\lbrace S' \right\rbrace
  \end{math}}
\centerline{\begin{math}
  T' = T \cup \left\lbrace \$ \right\rbrace
\end{math}}
\centerline{\begin{math}
  P' = P \cup \left\lbrace S' \rightarrow S\$ \right\rbrace
\end{math}}
Here \begin{math}S'\end{math} is called the augmented start symbol of \begin{math}G'\end{math} and \$ is the end of string marker. We denote a set of End Of String markers as EOS. \\
\indent In our paper we will represent Block Parallelized Parser as BPP, Jump Shift Reduce parser as JSR and Incremental Jump Shift Reduce parser as I\_JSR. An LR(1) item is represented as \begin{math}
  [A \rightarrow \alpha . \beta, a]
\end{math}
, where \textit{a} is the lookahead symbol. \\
\indent In a programming language, a block represents a section of code grouped together. This section of code can be a group of statements, or a group of declaration statements. For example in Java, a block corresponding to class definition contains declaration statements for fields and methods. Block corresponding to function definition can contain declaration statement for local variables or expression statements or control flow statements. A Top-Level Block is the  starting block of a program which contains definition for Classes, Functions, Import/Include statements etc. Child Blocks are contained in either Top-Level Block or another Child Block. As we proceed further, Block will be in reference to Child Block. Fig. 1, shows an example of Top-Level and Child Blocks of a Java Program.
\begin{figure}[t!]
\includegraphics[width=85mm]{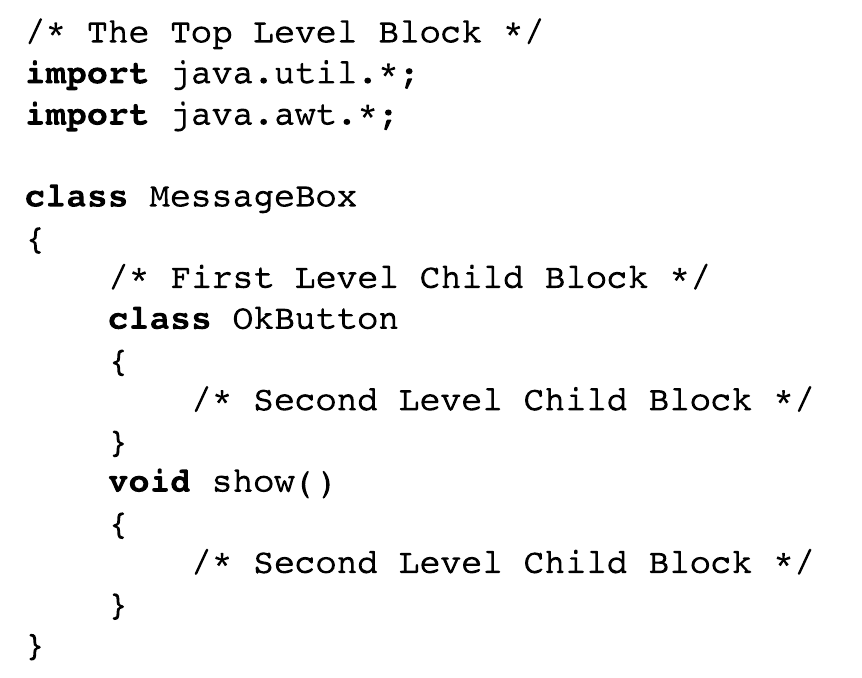}
\caption{Top-Level and Child Blocks of a Java Program.}
\end{figure}

\indent A start block symbol could be "\{" in C style languages, "begin" in Pascal style languages is represented as terminal \begin{math}s_{b}\end{math}. An end block symbol which could be "\}" in C style languages or "end" in Pascal style languages is represented as \begin{math}e_{b}\end{math}.
\section{Background}
We now survey LR (1) parsers and their generation algorithm as
given by \cite{aho}. LR (1) parsers are table driven Shift Reduce parsers.
In LR (1), “L” denotes left-to-right scanning of input symbols and “R” denotes constructing right most derivation in reverse. Some extra information is indicated with each item: the set of possible terminals which could follow the item’s LHS. This set of items is called the lookahead set for the item. Here, “1” signifies that number of lookahead symbols required are 1.\\
\indent LR (1) parser consists of an input, an output, a stack, a driver
routine. A driver routine runs its parsing algorithm which interacts
with two tables ACTION and GOTO. Any entry in ACTION and GOTO tables are indexed by a symbol which belongs to \begin{math}
  N \cup T'
\end{math} and the current state. An entry in both the tables can be any one of the following:
\begin{itemize}
  \item If ACTION [j, a] = \begin{math}<\end{math}S, q\begin{math}>\end{math} then a Shift Action must be taken.
  \item If ACTION [j, a] = \begin{math}<\end{math}R, \begin{math}
    A \rightarrow \alpha>
  \end{math} then reduce symbols to a production.
  \item If ACTION [j, a] = Accept then grammar is accepted.
  \item If ACTION [j, a] = error then Error has occurred.
  \item If GOTO [j,A] = q then go to state q.
\end{itemize}
\indent An LR (1) item is of the form \begin{math}
  [A \rightarrow \alpha . \beta,~a]
\end{math}, where \textit{a} is a lookahead symbol. Construction of LR (1) items requires two procedures CLOSURE and GOTO. CLOSURE takes a set of items as its argument. GOTO takes a set of items and a symbol as
arguments. Both are defined as follows:\\
\centerline{\begin{math}
  CLOSURE (I) = I \cup \{[B \rightarrow .\gamma,~b]~|~[A \rightarrow \alpha. B\beta,~a] \in I\end{math}}
  \centerline{\begin{math}and~B \rightarrow \gamma \in P'~\forall~b\in FIRST (\beta a)\} 
\end{math}}
\centerline{\begin{math}
  GOTO (I,~X) = CLOSURE (\{[A \rightarrow \alpha X. \beta,~a]~|~[A \rightarrow \alpha . X\beta, a] \in I\})
\end{math}}
Collection of set of LR (1) items is created using CLOSURE and GOTO functions. Items function creates the collection of set of LR (1) items.\\
\centerline{\begin{math}
  items (G') = CLOSURE (\{[S' \rightarrow .S,~\$]\}) \cup \{GOTO (I,X)~|~I \in C~and~X \in P'\}
\end{math}}
\indent ACTION and GOTO tables are created using this collection. Following is the procedure to create these tables:
\begin{enumerate}
  \item Create collection of set of LR(1) items. Let this collection be   \begin{math}
    C' = \{ I_0,~I_1,~I_2,...,~I_n\}
  \end{math}
  \item Let \textit{i} be the state of a parser constructed from \begin{math}
    I_i
  \end{math}. Entries in ACTION table are computed as follows:
  \begin{enumerate}
    \item ACTION [\textit{i}, \textit{a}] = shift \textit{j}, if \begin{math}
      [A \rightarrow \alpha .a \beta,~b] \in I_i
    \end{math} and \begin{math}
      GOTO (I_i, a) = I_j
    \end{math}
    \item ACTION [\textit{i}, \textit{a}] = reduce \begin{math}
      A \rightarrow \alpha.
    \end{math}, if \begin{math}
      [A \rightarrow \alpha., a] \in I_i 
    \end{math} and \begin{math}
      A \neq S'
    \end{math}
    \item ACTION [\textit{i}, \$] = accept, if \begin{math}
      [S' \rightarrow S., \$] \in I_i
    \end{math}
    \item GOTO [\textit{i}, \textit{A}] = \textit{j}, if GOTO (\begin{math}
      I_i
    \end{math}, A) = \begin{math}
      I_j
    \end{math}
  \end{enumerate}
  \item All other entries not defined by (b) and (c) are set to error.
  \item The Initial State is the one containing the item \begin{math}
    [S' \rightarrow .S,~\$]
  \end{math}.
\end{enumerate}
\indent Most of the programming languages could be constructed from LR (1) grammar. Hence, LR (1) is the most widely used parser. Many parser generators like YACC and GNU Bison generates an LR (1) parser.\\
\indent A Jump Shift Reduce \cite{degano} (JSR in short) parser is an extension
of LR (1) parser. LR (1) parser generates ACTION and GOTO table for the augmented grammar \begin{math}
  G'
\end{math}. JSR parser first partition the grammar \begin{math}
  G'
\end{math} into several sub grammars and creates parsing sub-
table of every sub grammar. Hence, the ACTION and GOTO tables of LR (1) are split into several ACTION and GOTO tables in JSR parser. JSR parser is equivalent to the LR (1) parser that is it will only accept languages
generated by LR (1) grammar \cite{degano}. \\
\indent Let \begin{math}
  G' = (N', T', S', P')
\end{math} be the augmented grammar of grammar \begin{math}
  G = (N, T, S, P)
\end{math}. Let us partition the grammar G on the basis of
Non Terminals. Let \begin{math}
  G^i
\end{math} denotes a partition of the grammar \textit{G}, such that we have\\
\centerline{\begin{math}
  G^i = (M^i, T^i, S^i, P^i) 
\end{math}}\\
where,
\centerline{\begin{math}
  N^i \subseteq N~such~that~N^i \cap N^j = \phi~|~ i,~j = 1,..., n, i \neq j 
\end{math}}
\centerline{\begin{math}
  P^i = \left \lbrace A \rightarrow \alpha \in P~|~A \in N^i \right \rbrace \forall i = 1,...,n  
\end{math}}
\centerline{\begin{math}
  M^i = N^i \cup \left \lbrace B \in N~|~ \exists A \rightarrow \alpha B \beta \in P^i \right \rbrace
\end{math}}
\centerline{\begin{math}
  T^i = \left \lbrace a \in T^i~|~\exists A \rightarrow \alpha a \beta \in P^i \right \rbrace
\end{math}}
\centerline{\begin{math}
  S^i \in N^i
\end{math}}\\
Therefore, we have\\
\centerline{\begin{math}
  (\cup N^i,~\cup T^i,~S',~\cup P^i) = G'
\end{math}}
\indent For every subgrammar \begin{math}
  G^i
\end{math}, a parsing subtable named Tab(\begin{math}
  S^i
\end{math}) is built. Each subtable contains ACTION and GOTO subtables which are represented by Tab(\begin{math}
  S^i
\end{math}).ACTION and Tab(\begin{math}
  S^i
\end{math}).GOTO. In addition to the Shift, Reduce and Accept action there is an additional Jump action. Jump action is associated with a sub-table. Whenever a Jump action is encountered then the parsing algorithm Jumps to a sub-table and parse the partition related to that sub-table.\\
\indent We will now investigate few points about the Incremental Jump Shift Reduce Parser \cite{degano}. Incremental Jump Shift Reduce Parser (I\_JSR) \cite{degano} is based upon the JSR parser \cite{degano}. A set of Incremental Categories will be defined which could be incrementally parsed by I\_JSR parser. Given a grammar \begin{math}
  G = (N, T, S, P)
\end{math} a set of Incremental Categories has to be defined \begin{math}
  IC = \{C_j~|~C_j\in N,~j=1,2,...,n\}
\end{math} and the Incremental Language of G is \\
\centerline{\begin{math}
  L^\ast(G) = \cup L^\ast(A),
\end{math}} where \begin{math}
  A \in IC \cup \{S\}
\end{math} and \begin{math}
  L^\ast(A)=\{\alpha \in (T\cup IC)^*~|~A \overset{\ast}{\Rightarrow} \alpha\}
\end{math}\\
\indent For every Incremental Category \textit{A}, add a production \begin{math}
  A \rightarrow A\#_A
\end{math}, where \begin{math}
  \#_A
\end{math} is an end-of-string for \textit{A}. For a given grammar \begin{math}
  G = (N, T, S, P)
\end{math} with a set of Incremental Categories \begin{math}
  IC = \{C_j~|~C_j\in N,~j = 1,2,...n\}
\end{math} an Incremental Grammar is defined as\\
\centerline{\begin{math}
  G^\ast = (N, T\cup EOS, S, P \cup P_IC),
\end{math}} where,
\centerline{
EOS is the set of End of String markers = \begin{math}
  \{\#_j~|~j=1,2,..n\}
\end{math}}\\
\centerline{\begin{math}
  P_{IC} = \{C_j \rightarrow C_j \#_j~|~C_j \in IC,~\#_j \in EOS\}
\end{math}}\\
\indent A major change by this extension is that now the strings may contain incremental symbols which are also non-terminal symbols. The difference between ACTION and GOTO tables disappears, as an incremental category can also occur in the input string and can be shifted on the stack. Hence, we would have only ACTION table and no need for GOTO table. As we have also introduced EOS set, the ACTION table can
now be indexed with symbols belonging to \begin{math}
  N \cup T' \cup EOS
\end{math}. Every Incremental Category will have its own start state and accept action. We will represent the subtable as Tab(\begin{math}S^t\end{math}). Entries of table will be as follows:
\begin{itemize}
  \item If Tab(\begin{math}S^t\end{math}) [\textit{j}, \textit{X}] = \begin{math}<\end{math}S, q\begin{math}>\end{math} then a Shift Action must be taken.
  \item If Tab(\begin{math}S^t\end{math}) [\textit{j}, \textit{X}] = \begin{math}<\end{math}R, \begin{math}
    A \rightarrow \alpha .>
  \end{math} then a Reduce Action must be taken.
  \item If Tab(\begin{math}S^t\end{math}) [\textit{j}, \textit{\$}] = Accept then input is accepted.
  \item If Tab(\begin{math}S^t\end{math}) [\textit{j}, \begin{math}
    \#_i
  \end{math}] = Accept then input for Incremental Category \begin{math}
    C_i
  \end{math} will be accepted.
  \item If Tab(\begin{math}S^t\end{math}) [\textit{j}, \textit{a}] = \begin{math}<\end{math}J, K\begin{math}>\end{math} then jump to a subtable Tab(\begin{math}
    S^k
  \end{math}).
  \item If Tab(\begin{math}S^t\end{math}) [\textit{j}, \textit{a}] = error then error occurred.
\end{itemize}
\section{Extending I\_JSR Parser to accept languages generated by LR (1) grammars}
As JSR Generation Algorithm was already developed for LR (0) items and Incremental JSR Generation Algorithm was developed for SLR (0) items \cite{degano}. In this section, we will first extend the JSR Parser Generation Algorithm to accept LR (1) Grammar and then we will extend I\_JSR parser to accept LR(1) Grammar.\\
\indent Generation of subtables first requires the generation of canonical collection of sets of augmented items. An augmented item is a triplet represented as \begin{math}<\end{math}i, FF, TF\begin{math}>\end{math}, where i is an LR item, FF called From-Field and TF called To-Field are the names of parsing sub-tables. From-Field represents the sub-table which contains the last action performed by parser and To-Field represent the sub-table which contains the next action to be performed. Although we focus only LR (1) items but the procedure for LR (k) items is very similar.\\
\indent To-Field of an augmented item of a state is determined using
TO function. Let us define the TO function.
\begin{algorithm}
\begin{algorithmic}[1]
\Procedure {TO}{$I_j$}
\State $I_j'' = \phi$
\ForAll {item \begin{math} i_k \end{math} in \begin{math} I_j \end{math}}
\If {$i_k~is~[A \rightarrow \alpha . a \beta,~b]$}
\State $add~ < i_k, CHOOSE_NEXT(I_j, a)>~to~I_j''$
\ElsIf {$i_k~is~[A \rightarrow \alpha . B \beta,~b]$}
\State $add~ < i_k, S^i>~to~I_j'', where~B\in N^i$
\ElsIf {$i_k~is~[A \rightarrow \beta .,~b]$}
\State $add~ < i_k, S^i>~to~I_j'', where~A\in N^i$
\EndIf
\EndFor
\Return {$I_j''$}
\EndProcedure
\end{algorithmic}
\end{algorithm}
Function TO calls Function CHOOSE\_NEXT. This function selects a parsing table out of those in which the parsing of the remaining string could continue. Let \begin{math}
  <\cdot
\end{math} be a total ordering relation over the set \begin{math}
  ST = \{S^p~|~p=1,2,...,n\}
\end{math} of the names of parsing sub-tables, such that \begin{math}
  S^i <\cdot S^{i+1},~i=1,2,...n-1
\end{math}\\
\begin{algorithm}
\begin{algorithmic}[1]
\Procedure {CHOOSE\_NEXT}{$I_j,~a$}
\State $let~ST'~be~a~set~of~parsing~subtables$
\ForAll {$item~[H \rightarrow \alpha. a \beta,~b]~in~I_j$}
\State $add~S_h~to~ST'~such~that~H\in N_h$
\EndFor
\Return {$min <\cdot ST'$}
\EndProcedure
\end{algorithmic}
\end{algorithm}
\indent From-Field of an augmented item is enriched using FROM function. FROM takes two arguments \begin{math}
  I_t''
\end{math}, which is a set of items enriched with To-Field and \begin{math}
  I_j
\end{math}, whose items we have to enrich with From-Field.\\
\begin{algorithm}
\begin{algorithmic}[1]
\Procedure{FROM}{$I_t'',~I_j$}
\State $I_j' = \phi$
\ForAll {$item i_k \in I_j$}
\If {$i_k~is~[S'\rightarrow .S\$, b]$}
\State $add < i_k,~S^1> to I_j$
\ElsIf {$i_k~is~[A\rightarrow \alpha X.\beta,~b]$}
\State $add <i_k,~TF>~to~I_j', where~<[A \rightarrow \alpha .X\beta,~b],~TF> \in I_t''$
\ElsIf {$i_k~is~[A\rightarrow .\beta,~b]$}
\State $add <i_k,~FF>~to~I_j', where~<[B \rightarrow \alpha .A\beta,~b],~FF> \in I_j'$
\EndIf
\EndFor
\Return $I_j'$
\EndProcedure
\end{algorithmic}
\end{algorithm}
\indent STATES procedure is used to generate the collection of set of JSR items. STATES algorithm first generates the collection of sets of LR(1) items using ITEMS procedure which was discussed previously. Afterwards, it calls TO and FROM functions to generate set of augmented items from the corresponding set of LR(1) items. \\
\begin{algorithm}
\begin{algorithmic}[1]
\Procedure {STATES} {$G'$}
\State $C' = items(G')$
\State $I^A = \phi$
\ForAll {$I_j \in C'$}
\State $I_j'=\{<i_k, FF_k>~|~i_k\in I_j\} = FROM(I_t'',~I_j),~where~I_j=GOTO(I_t,~X)~and~I_t=\phi~if~j=0$
\State $I_j''=\{<i_k, TF_k>~|~i_k\in I_j\}=TO(I_j)$
\State $I_j^A=\{<i_k, FF_k,TF_k>~|~<i_k,FF_k>\in I_j'~and~<i_k, TF_k> \in I_j''\}$
\State $I^A = I^A \cup \{I_j^A\}$
\EndFor
\Return $I^A$
\EndProcedure
\end{algorithmic}
\end{algorithm}
\indent We will now extend I\_JSR parser to accept LR (1) languages. This extended parser is based on the previous JSR parsing algorithm. The FIRST function used to compute the set of First symbols related to a non-terminal has to be modified to include the incremental categories also. The reason being an incremental category can also occur in the input string, can be shifted on the stack while parsing and can reduce to a
production. Moreover, FIRST should now also include the EOS markers. Hence, the new FIRST becomes\\
\centerline{\begin{math}
  FIRST (A) = \{a~|~A\overset{\ast}{\Rightarrow} a\beta,~where~a \in T' \cup EOS \cup IC \} 
\end{math}}
\indent For an incremental category \textit{A}, there will be a set of items containing item \begin{math}  [A \rightarrow .A\#_A, \#_A] \end{math} and items 
\begin{math}  [B \rightarrow \alpha.A\beta \#_A] \end{math}. Then
the state corresponding to this set of items will be the start state of
the incremental grammar corresponding to A. Correspondingly, there will a single set of items that contains the item \begin{math} [A \rightarrow A.\#_A, \#_A] \end{math}. The state belonging to this set of item is the accepting state of the incremental grammar corresponding to A. The I\_JSR parser has initial and final states for every incremental category.\\
\indent Now, we can extend the I\_JSR parser for accepting languages generated by LR(1) grammars. The procedure I\_TABS given below is used to construct the I\_JSR parsing table.
\begin{algorithm}
\begin{algorithmic}[1]
\Procedure {I\_TABS}{$G'$}
\State $C' = items(G')$
\ForAll {$I_j^A \in C'$}
\ForAll {$<i_h,H,K>\in I_j^A$}
\If {$i_h~is~of~the~form~[A \rightarrow \alpha .X\beta, b]$}
\State $Tab(S^K)[j, X] = <S, q>~where~GOTO (I_j, X) = I_q$
\If {$H\neq K$}
\State $Tab (S^H)[j,X]=<J, K>$
\EndIf
\ElsIf {$i_h~is~of~the~form~[A \rightarrow \alpha ., X]$}
\State $Tab(S^K)[j, X] = <R, A \rightarrow \alpha>$
\If {$H\neq K$}
\State $Tab(S^H)[j, X] = <J, K>$
\EndIf
\ElsIf {$i_h~is~of~the~form~[S' \rightarrow S. \$, \$] $}
\State $Tab(S^{S'})[j, \$]=accept$
\ElsIf{$i_h~is~of~the~form~[A \rightarrow A.\#_A, \#_A]$}
\State $Tab(S^K)[j, \#_A] = accept$
\EndIf
\EndFor
\EndFor
\EndProcedure
\end{algorithmic}
\end{algorithm}
\section{First Non Terminals}
In this section we will define the concept of First Non Terminals.\\
\indent We define a set of First Non Terminals for a non terminal \textit{A} as a set of non-terminals that appear at the beginning of any sentential form derived from \textit{A} i.e. a set of non terminals \textit{B} such that there exists a derivation of the form \begin{math}
  A \overset{\ast}{\Rightarrow} B\beta
\end{math}. \begin{math}
  FIRSTNT (A)
\end{math}
represents the set of First Non Terminals for \textit{A} and can be represented in set notations as:\\
\centerline{\begin{math}
  FIRSTNT (A) = \displaystyle\bigcup_{B} \left\lbrace B~|~A \overset{\ast}{\Rightarrow} B\beta \right\rbrace 
\end{math}}
To compute \begin{math}
  FIRSTNT (A)
\end{math} for any non-terminal A, apply the following rules until no more terminals can be added to the \begin{math}
  FIRSTNT (A)
\end{math} set.
\begin{enumerate}
  \item If \textit{A} is a terminal, then \begin{math}
    FIRSTNT (A) = \phi
  \end{math}
  \item If \textit{A} is a non-terminal and \begin{math}
    A \rightarrow B_1 B_2 ... B_k 
  \end{math}
  is a production for some \begin{math}k \leq 1\end{math}, then place \begin{math}
    B_i 
  \end{math} and everything in \begin{math}
    FIRSTNT (B_i) 
  \end{math} in \begin{math}
    FIRSTNT (A) 
  \end{math} if \begin{math}
    B_1, B_2, ... B_{i-1} \overset{\ast}{\Rightarrow} \epsilon 
  \end{math}
  \item If \begin{math} A \rightarrow \epsilon \end{math} is a production, then \begin{math}
    FIRSTNT (A) = \phi
  \end{math}
\end{enumerate}
EXAMPLE 1: If we have following productions:
  \begin{itemize}
  \item[] \begin{math}
    S \rightarrow D A B
  \end{math}
  \item[] \begin{math}
    S \rightarrow C
  \end{math}
  \item[] \begin{math}
    A \rightarrow a B
  \end{math}
  \item[] \begin{math}
    B \rightarrow b
  \end{math}
  \item[] \begin{math}
    C \rightarrow c
  \end{math}
  \item[] \begin{math}
    D \rightarrow d
  \end{math}
  \end{itemize}
Then find \begin{math}
  FIRSTNT (S)
\end{math}?

SOLUTION 1:\\
Due to first two productions of S we have,\\
\centerline{\begin{math}
  FIRSTNT (S) = FIRSTNT (A) \bigcup FIRSTNT (C) \bigcup \left\lbrace A, C \right\rbrace
\end{math}}
\centerline{\begin{math}
  FIRSTNT (A) = FIRSTNT (a) = \phi
\end{math}}
\centerline{\begin{math}
  FIRSTNT (C) = FIRSTNT (c) = \phi
\end{math}}
\section{Using First Non Terminals to create partitions}
In this section we will use the concept of First Non Terminals to create partitions of grammar. I\_JSR Parser will use these partitions to develop its tables. We will see that this kind of partitioning leads to a very interesting property in I\_JSR Parsing algorithm. We will generalize this property in a theorem and will also prove it.\\
\indent We will partition the grammar in such way that:
\begin{itemize}
  \item Every Incremental Category will have its own partition.
  \item The partition of Incremental Category will contain First Non Terminals of that incremental category also.
  \item Intersection of set of First Non-Terminals of any two incremental categories must be empty.
  \item All other remaining non-terminals including the augmented and start symbol are included in the first partition.
\end{itemize}
\indent Given a grammar \begin{math}
  G^\ast = (N,~T~\cup~EOS,~S,~P~\cup~P_{IC}) 
\end{math} with a set of Incremental Category, \begin{math}
  {IC = \left\lbrace C_t~|~C_t \in N \right\rbrace}\end{math} we define partitions of non-terminals as \begin{math}
  N^2, N^3, ..., N^n
\end{math} such that:\\
\centerline{\begin{math}
  N^t = \left\lbrace C_t \right\rbrace \cup FIRSTNT (C_t)~and~C_t \neq S
\end{math}}
and\\
\centerline{\begin{math}
  FIRSTNT (C_t) \cap FIRSTNT(C_s) = \phi,~where~t,~s~=~1,~2,~...,n~and~t~\neq~s
\end{math}}
And first partition,\\
\centerline{\begin{math}
  N^1 = N - \displaystyle\bigcup_t N^t
\end{math}}

Example 2: Partition the grammar given in Example 1 using first non terminals as given above and create I\_JSR parsing table. Then parse the string "\textit{dabb}" and parse string "\textit{ab}" incrementally for incremental category \textit{A}.

Solution 2: First we have to create an augmented grammar of the given grammar by adding production \begin{math}
  S' \rightarrow S\$
\end{math}. Next we can create two partitions of grammar as given by the following partitions of non terminals.\\
\centerline{\begin{math}
  N^1 = \left\lbrace S',~S,~B,~C,~D\right\rbrace
\end{math}}
\centerline{\begin{math}
  N^2 = \left\lbrace A \right\rbrace 
\end{math}}
\indent As the incremental category we want is only \textit{A}. Hence, there will be two partitions, \begin{math}
  N^2
\end{math} containing \textit{A} and its first non terminals and \begin{math}
  N^1
\end{math} will contain the remaining non terminals. Moreover, we would also have to add an EOS marker for \textit{A}, let us say it is \begin{math}\#_1\end{math}. We can generate the I\_JSR table using \textit{I\_TABS} procedure described in Section 5.\\
\indent Let us parse the string "\textit{dabb}". Table 1 shows the series of actions taken while parsing "\textit{dabb}". In this case the start state will be the start state of table \begin{math}
  Tab (S')
\end{math} i.e. 0.
Table 2 shows the series of actions taken when "\textit{ab}" is parsed incrementally with the incremental category \textit{A}. In this case the start state will be the start state of table Tab (A) i.e. 2.\\
\begin{table}[ht]
\begin{minipage}[b]{0.45\linewidth}\centering
\caption{Parsing of "\textit{dabb}"}
\begin{tabular}{|l|r|l|}
\hline
STACK      & INPUT & ACTION                        \\ \hline
0        & dabb\$  & Shift 5                        \\ \hline
0d5      & abb\$   & Reduce \begin{math}D \rightarrow d \end{math}     \\ \hline
0D2      & abb\$   & Jump A                     \\ \hline
0D2      & abb\$   & Shift 7                    \\ \hline
0D2a7    & bb\$    & Jump S'                    \\ \hline
0D2a7    & bb\$    & Shift 11                   \\ \hline
0D2a7b11 & b\$     & Reduce \begin{math}B \rightarrow b \end{math}     \\ \hline
0D2a7B10 & b\$     & Jump A                     \\ \hline
0D2a7B10 & b\$     & Reduce \begin{math}A \rightarrow a B\end{math}  \\ \hline
0D2A6    & b\$   & Shift 9                      \\ \hline
0D2A6b9  & \$     & Reduce \begin{math}B \rightarrow b\end{math}     \\ \hline
0D2A6B8  & \$     & Reduce \begin{math}S \rightarrow DAB\end{math}   \\ \hline
0S1      & \$     & Accept                     \\ \hline
\end{tabular}
\end{minipage}
\hspace{0.5cm}
\begin{minipage}[b]{0.45\linewidth}
\caption{Parsing of "\textit{ab}"}
\label{my-label}
\begin{tabular}{|l|r|l|}
\hline
STACK      & INPUT & ACTION                  \\ \hline
2          & ab\begin{math}\#_1\end{math}  & Shift 7                     \\ \hline
2a7        & b\begin{math}\#_1\end{math}  & Jump S'                      \\ \hline
2a7        & b\begin{math}\#_1\end{math}  & Shift 11                      \\ \hline
2a7b11     & \begin{math}\#_1\end{math}   & Reduce \begin{math} B \rightarrow b\end{math}     \\ \hline
2a7B10     &\begin{math}\#_1\end{math}    & Jump A                          \\ \hline
2a7B10     & \begin{math}\#_1\end{math}   & Reduce \begin{math} A \rightarrow a B \end{math}     \\ \hline
2A6        & \begin{math}\#_1\end{math}    & Accept                            \\ \hline
\end{tabular}
\end{minipage}
\end{table}

\indent In Table 1, when stack state reaches 0D2 there is a Jump to the Table Tab (A). From this point until when stack state changes to 0D2A6, the actions taken are same as the actions of Table 2 and in the same table i.e. Tab (A). Moreover, in between these Stack states in Table 1 "\textit{ab}" is parsed to \textit{A}.\\
\indent We can generalize this example in the sense that same series of actions are taken when parsing a string and when parsing its substring incrementally for its incremental category. In the current example all the same actions happens in the same table because we created the partitions in such a way that all the first non terminals are in that partition. If the partitions were not created in the way described, it could have happened that these actions would happen in the different sub tables.\\
\indent This technique is utilized by our BPP and it is generalized and proved in the theorem below.
\begin{thm}
Given an Incremental Grammar \begin{math}
  G^\ast = (N, T, P', S')
\end{math} with set of incremental categories \begin{math}
  IC = \left \lbrace C_t~|~C_t \in N \right \rbrace 
\end{math} such that the non terminal partition, \begin{math}
  N^t
\end{math} related to incremental category \begin{math}
  C_t
\end{math}contains only \begin{math}
  C_t
\end{math} and \begin{math}
  FIRSTNT (C_t)
\end{math}. For an incremental category \begin{math}
  C_t
\end{math} and any terminal \begin{math}
  b \in FIRST (C_t)
\end{math}, if \begin{math}
  C_t \overset{\ast}{\Rightarrow} b\gamma
\end{math}
and during parsing of the word \begin{math}
  w="\mu b\gamma \delta"
\end{math} the parser reaches at a state \textit{q} in the subtable of \begin{math}
  C_t
\end{math} after performing the shift action on \textit{b} then during the incremental parsing of the word \begin{math}
  w_t = "b\gamma"
\end{math} for the incremental category \begin{math}
  C_t
\end{math} the parser will also reach the state \textit{q} after performing the shift action on \textit{b} in the sub table of \begin{math}
  C_t
\end{math}.
\end{thm}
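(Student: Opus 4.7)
The plan is to decompose the claim into two parts: (i) both parses sit in the same state of the subtable $Tab(S^t)$ immediately before the shift on $b$; and (ii) the shift action within a single subtable is deterministic. Part (ii) is immediate from the construction I\_TABS, which assigns a single entry $Tab(S^t)[s,b]$ to every state--symbol pair. All of the real work is therefore in establishing (i).

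For the incremental parse of $w_t = b\gamma$, by construction the parser starts in the state $s_t$ whose item core is the closure of $\{[C_t \to .C_t \#_t, \#_t]\}$. Because $b \in FIRST(C_t)$ and $C_t \overset{*}{\Rightarrow} b\gamma$, this closure contains an item of the form $[A \to .b\eta, d]$ with $A \in N^t$, and so the first move is a shift on $b$ from $s_t$. For the full parse of $w = \mu b \gamma \delta$, I would argue that at the moment the hypothesized shift on $b$ happens the parser is also in $s_t$. Just before control enters $Tab(S^t)$, the parser is driving an item of the form $[B \to \alpha . C_t \beta, c]$ with $B \notin N^t$; the I\_TABS branch for $H \neq K$ then emits a Jump into $Tab(S^t)$. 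Every subsequent expansion required to expose the terminal $b$ introduces items whose left-hand sides are non-terminals appearing leftmost in some derivation $C_t \overset{*}{\Rightarrow} b\gamma$, hence members of $FIRSTNT(C_t) \cup \{C_t\} = N^t$. By the partition hypothesis, the To-Field of each such item is $S^t$, so control remains in $Tab(S^t)$, and the landing state of the jump is precisely the state whose items form the $N^t$-closure of the productions beginning a derivation of $C_t$, namely $s_t$.

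Once both parses are in $s_t$ with $b$ as the next input symbol, the deterministic entry $Tab(S^t)[s_t, b] = \langle S, q \rangle$ forces both parsers into the same successor state $q$, which is the required conclusion. The proof is essentially a reduction to the (routine) observation that, inside a single subtable, the I\_JSR parsing algorithm behaves deterministically, together with the structural claim that the partition scheme based on $FIRSTNT$ causes both the incremental and the full parse to enter $Tab(S^t)$ at a canonical start state.

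The main obstacle I anticipate is the treatment of lookaheads. In the full parse the item $[A \to .b\eta, d]$ carries a lookahead $d$ inherited from the outer continuation $\delta$, whereas in the incremental parse the analogous item carries $\#_t$. I plan to handle this by invoking the production $C_t \to C_t \#_t$ in $P_{IC}$, which places $\#_t$ in $FOLLOW(C_t)$ in $G^*$, so the lookahead sets attached to the items in $s_t$ agree on the components needed to justify the shift; equivalently, the GOTO-successor on $b$ is determined by the core of the item set and is insensitive to the lookahead. Making this invariance precise, and verifying that the FROM/TO bookkeeping of the STATES procedure preserves it across the jump, is where the genuine technical work will lie.
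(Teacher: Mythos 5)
Your reduction hinges on claim (i): that the full parse of $w$ and the incremental parse of $w_t$ occupy the \emph{same} state $s_t$ (the canonical start state of $Tab(S^t)$) immediately before the shift on $b$. That claim is false in general, and it rests on a misreading of what a Jump action does. In the I\_JSR machine a Jump entry $\langle J, K\rangle$ only switches the current subtable from $S^h$ to $S^K$; it does not push a new state or reset the automaton to the start state of $Tab(S^K)$ (see Algorithm 1, where the jump branch assigns $h = k$ and leaves the stack untouched). Consequently, in the full parse the pre-shift state is whatever LR(1) state the left context produces. The paper's own proof makes this concrete: when the production $B \rightarrow C_t\beta$ is the one used (its Case 1.2), the pre-shift state has kernel $[B \rightarrow .C_t\beta, j]$, whereas the incremental start state $I_m$ contains $[B \rightarrow cX.C_t\beta, j]$ and $[C_t \rightarrow .C_t\#_t, d]$ --- two different item sets, hence two different states. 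The actual content of the theorem is that the \emph{post}-shift states coincide even though the pre-shift states need not: $GOTO(\cdot, b)$ retains only the items with the dot before $b$, namely the $[A \rightarrow .b\beta, f]$ with $A \in FIRSTNT(C_t) \subseteq N^t$, and these (cores, lookaheads, and TO/FROM fields) agree across both pre-shift states, so the two CLOSUREs --- and therefore the two successor states --- are identical. The paper establishes this by explicitly computing both item sets in a case analysis over which production introduces $C_t$.

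Your proposed escape hatch for the lookahead problem --- that the GOTO-successor on $b$ is determined by the core of the item set and is insensitive to the lookahead --- is an LALR property, not a canonical LR(1) property; in LR(1) two states with equal cores but different lookaheads are distinct states, so the lookaheads cannot be waved away. You do correctly isolate where the real work lives (the lookahead bookkeeping and the FROM/TO enrichment), and your observation that the table entry $Tab(S^t)[s, b]$ is deterministic is fine as far as it goes, but the argument needs restructuring: drop claim (i) and instead show directly that the dot-before-$b$ item subsets of the two (generally distinct) pre-shift states are equal, which is what forces the shift on $b$ to land both parsers in the same state $q$ of the subtable of $C_t$.
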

\begin{proof}
\textbf{Outline}: We will divide the proof in 4 cases. For each case, we will first find two sets of JSR items reached after performing shift on \textit{b} one during the parsing of word \begin{math}
  w
\end{math} and other during the incremental parsing of word \begin{math}
  w_t
\end{math} related to incremental category \begin{math}
  C_t
\end{math}. We will then show that both of these sets contains same JSR items which implies the above theorem.\\
It is given that \begin{math}
  C_t
\end{math} is an incremental category and \begin{math}
  C_t \neq S
\end{math} and let \begin{math}
  N^t
\end{math} be the partition containing only \begin{math}
  C_t
\end{math} and \begin{math}
  FIRSTNT (C_t)
\end{math}. Let \begin{math}
  S^t
\end{math} be the sub-table related to \begin{math}
  C_t
\end{math}. For any non-terminal \begin{math}
  A \in FIRSTNT (C_t)
\end{math} and \begin{math}
  A \rightarrow b\beta
\end{math}, we must have \begin{math}
  A \in N^t
\end{math}.\\
\indent During incremental parsing of the word \begin{math}
  w_t = b\gamma
\end{math} for the incremental category \begin{math}
  C_t
\end{math}, the state before performing actions related to \textit{b} will be the start state of the sub table \begin{math}
  Tab (C_t)
\end{math}. Let that start state be \textit{m}.\\
We will have four cases on the basis of whether the grammar has productions of the form, \begin{math}
  B \rightarrow cXC_t\beta
\end{math} and \begin{math}
  B \rightarrow C_t \beta
\end{math}\\
\textbf{Case 1}: If \begin{math}
  B \rightarrow cXC_t\beta \in P'
\end{math}
and
\begin{math}
  B \rightarrow C_t\beta \in P'
\end{math}\\
\indent Let the set of LR(1) items related to the start state of \begin{math}
  Tab (C_t)
\end{math} i.e. state \textit{m} be \begin{math}
  I_m
\end{math} and \textit{X} lies in some partition other than \begin{math}
  N^t
\end{math} i.e. \begin{math}
  X \in N^h
\end{math} and \begin{math}
  t \neq h
\end{math}.

As noted by \cite{degano} the start state of \begin{math}
  Tab (C_t)
\end{math} must contain the item \begin{math}
  [C_t \rightarrow .C_t \#_t, d]
\end{math}, where \textit{d} is a lookahead symbol. So we have
\begin{math}
  [C_t \rightarrow .C_t \#_t, d] \in I_m
\end{math}. It is evident that the item \begin{math}
  [C_t \rightarrow .C_t \#_t, d]
\end{math} should be result of a closure of another item. The only such item we can see is \begin{math}
  [B \rightarrow cX.C_t\beta, j]
\end{math}, where \textit{j} is some lookahead symbol. So, we must have \begin{math}
  [B \rightarrow cX.C_t\beta, j] \in I_m
\end{math}. As discussed in Section 2, to get a set of LR(1) items we have to apply CLOSURE (\begin{math}
  [B \rightarrow cX.C_t\beta, j]
\end{math}). Hence, we have\\
\centerline{\begin{math}
  [B \rightarrow c.XC_t\beta, j] \in I_m
\end{math}}\\
\centerline{\begin{math}
  [C_t \rightarrow .C_t\#_t, d] \in I_m
\end{math}}\\
\centerline{\begin{math}
  [C_t \rightarrow .A\gamma \beta, e] \in I_m
\end{math}}\\
\centerline{\begin{math}
  [A \rightarrow .b\beta, f] \in I_m
\end{math}}\\
where \textit{e} and \textit{f} are some lookahead symbols.\\
\indent Let \begin{math}
  I_o
\end{math} be the set of LR(1) items such that \begin{math}
  I_m = GOTO (I_o, X)
\end{math}. So, \begin{math}
  [B \rightarrow cX.C_t\beta, j] \in I_o
\end{math}. Let \begin{math}
  I_o''
\end{math} be a set of JSR items enriched with TO fields corresponding to all LR(1) items in \begin{math}
  I_o
\end{math}. After applying TO procedure over \begin{math}
  I_o
\end{math} we would get the TO field for item \begin{math}
  [B \rightarrow cX.C_t\beta, j] 
\end{math} as \begin{math}
  S^h
\end{math} because \begin{math}
  X \in N^h
\end{math}.\\
\indent To get TO fields for all JSR items corresponding to LR (1) items of the set \begin{math}
  I_m
\end{math} we have to apply TO procedure over \begin{math}
  I_m
\end{math}. We could see that the TO fields for items \begin{math}
  [B \rightarrow cX.C_t\beta, j]
\end{math}, \begin{math}
  [C_t \rightarrow .C_t\#_t, d]
\end{math},
\begin{math}
  [C_t \rightarrow .A\gamma, e]
\end{math},
\begin{math}
  [A \rightarrow .b\beta, f]
\end{math}
will be \begin{math}
  S^t
\end{math}. \\
\indent To enrich JSR items for all the LR(1) items in \begin{math}
  I_m
\end{math} with FROM field, we would apply FROM procedure as \begin{math}
  FROM (I_o'', I_m)
\end{math}. Now we could see that the FROM field of JSR item of LR(1) item \begin{math}
  [C_t \rightarrow .A\gamma, e]
\end{math} will be equal to the FROM field of \begin{math}
  [B \rightarrow cX.C_t\beta, j]
\end{math} which in turn is equal to the TO field of \begin{math}
  [B \rightarrow c.XC_t\beta, j]
\end{math}, which is \begin{math}
  S^h
\end{math}. So, the set of JSR items (\begin{math}
  I_m^A
\end{math}) corresponding to \begin{math}
  I_m
\end{math} contains the following items:\\
\centerline{\begin{math}
  [B \rightarrow cX.C_t\beta, j, S^h, S^t] \in I_m^A
\end{math}}\\
\centerline{\begin{math}
  [C_t \rightarrow .C_t\#_t, d, S^h, S^t] \in I_m^A
\end{math}}\\
\centerline{\begin{math}
  [C_t \rightarrow .A\gamma, e, S^h, S^t] \in I_m^A
\end{math}}\\
\centerline{\begin{math}
  [A \rightarrow .b\beta, f, S^h, S^t] \in I_m^A
\end{math}}\\
Let after performing shift operation in the state \textit{m} over the symbol \textit{b}, parser reaches state \textit{n}. So, we must have\\
\centerline{\begin{math}
  I_n = GOTO (I_m, a) = CLOSURE (\{[A \rightarrow b.\beta, g]~|~\forall~g \in FIRST (\beta f)\})
\end{math}}\\
Moreover, the TO and FROM fields of all items in the above state will be \begin{math}
  S^t
\end{math}. We have obtained the set of JSR items reached after performing shift over the symbol \textit{b} during incremental parsing of word \begin{math}
  w_t
\end{math} for incremental category \begin{math}
  C_t
\end{math}. Also, the subtable at this state is \begin{math}
  Tab (S^t)
\end{math}.\\
\indent We will now obtain the set of JSR items reached after performing shift over the symbol \textit{b} during the parsing of word \begin{math}
  w
\end{math} for incremental category \begin{math}
  C_t
\end{math}. Let us consider the derivation \begin{math}
  S \overset{\ast}{\Rightarrow} \eta B \overset{\ast}{\Rightarrow} \mu b \gamma \delta
\end{math} such that, \textit{B} doesn't derive \begin{math}
  \eta
\end{math}. In the above derivation only one production out of \begin{math}
  B \rightarrow cXC_t\beta
\end{math} and \begin{math}
  B \rightarrow C_t \beta
\end{math} will be used. Hence, we have two cases of the basis of which production is used.\\
\\
\textit{Case 1.1}: If production \begin{math}
  B \rightarrow cXC_t\beta
\end{math} is used, then the set of items related to state reached just after performing shift on \textit{c} in the above derivation say \begin{math}
  I_x
\end{math} will contain the LR(1) item \begin{math}
  [B \rightarrow c.XC_t\beta, j]
\end{math} besides other items of the form \begin{math}
  [X \rightarrow .\delta, k]
\end{math}, where \begin{math}
  k \in FIRST (C_t\beta c)
\end{math}. After applying TO procedure on \begin{math}
  I_x
\end{math}, we should get the TO field for LR(1) item \begin{math}
  [B \rightarrow c.XC_t\beta, j]
\end{math} as \begin{math}
  S^h
\end{math}. Let us represent these set of items enriched with TO field as \begin{math}
  I_x''
\end{math}.\\
\indent After parsing \begin{math}
  X
\end{math} next state will be obtained by performing GOTO over \begin{math}
  I_x
\end{math} with symbol \textit{X}. Let that state be \begin{math}
  I_y
\end{math}. Now we must have,\\
\centerline{\begin{math}
  I_y = GOTO (I_x, X)
\end{math}}. After performing GOTO we get\\
\centerline{\begin{math}
  [B \rightarrow cX.C_t\beta, j] \in I_y
\end{math}}
\centerline{\begin{math}
  [C_t \rightarrow .C_t\#_t, d] \in I_y
\end{math}}
\centerline{\begin{math}
  [C_t \rightarrow .A\gamma, e] \in I_y
\end{math}}
\centerline{\begin{math}
  [A \rightarrow .b\beta, f] \in I_y
\end{math}}
\indent After applying TO procedure over \begin{math}
  I_y
\end{math}, we get the TO fields of items \begin{math}
  [B \rightarrow cX.C_t\beta, j]
\end{math} \begin{math}
  [C_t \rightarrow .C_t\#_t, d]
\end{math} \begin{math}
  [C_t \rightarrow .A\gamma, e]
\end{math} \begin{math}
  [A \rightarrow .b\beta, f]
\end{math} as \begin{math}
  S^t
\end{math}. To enrich JSR items of LR(1) items in \begin{math}
  I_y
\end{math} we would apply FROM procedure as \begin{math}
  FROM (I_x'', I_y)
\end{math}. Now, we could see that the FROM field of JSR item for \begin{math}
  [C_t \rightarrow .A\gamma, e]
\end{math} will be equal to the FROM field of JSR item for \begin{math}
  [B \rightarrow cX.C_t\beta, j]
\end{math} which in turn is equal to the TO field of \begin{math}
  [B \rightarrow c.XC_t\beta, j]
\end{math}, which is \begin{math}
  S^h
\end{math}. So, the set of JSR items (\begin{math}
  I_y^A
\end{math}) corresponding to \begin{math}
  I_y
\end{math} contains the following items:\\
\centerline{\begin{math}
  [B \rightarrow cX.C_t\beta, j, S^h, S^t] \in I_y^A
\end{math}}
\centerline{\begin{math}
  [C_t \rightarrow .C_t\#_t, d, S^h, S^t] \in I_y^A
\end{math}}
\centerline{\begin{math}
  [C_t \rightarrow .A\gamma, e, S^h, S^t] \in I_y^A
\end{math}}
\centerline{\begin{math}
  [A \rightarrow .b\beta, f, S^h, S^t] \in I_y^A
\end{math}}
Let the state reached after performing shift of \textit{b} from the state \begin{math}
  I_y
\end{math} be \begin{math}
  I_z
\end{math}. Then, \\
\centerline{\begin{math}
  I_z = GOTO (I_y, b) = CLOSURE (\left \lbrace[A \rightarrow b.\beta, g]~|~\forall~g \in FIRST (\beta f)\right \rbrace)
\end{math}}
\indent Moreover, the TO and FROM fields of all JSR items of \begin{math}
  I_z
\end{math} will be \begin{math}
  S^t
\end{math}.\\
So, we have \begin{math}
  I_z = I_n
\end{math}. This shows that the states \textit{z} and \textit{n} are same. Let us name these states as \textit{q}. Also, the TO fields for the items of \begin{math}
  I_z
\end{math} and \begin{math}
  I_n
\end{math} are same i.e. \begin{math}
  S^t
\end{math}. Hence, in this case Shift on terminal \textit{b} in states \textit{m} and \textit{x} results only in one state \textit{q} and in the sub-table for incremental category \begin{math}
  C_t
\end{math}.\\
Theorem is \textbf{proved} in this case.\\
\textit{Case 1.2}: If \begin{math}
B \rightarrow cXC_t\beta
\end{math} is used. Let \textit{x} is the state reached before performing the shift on \textit{b}. Then, the set of items, say \begin{math}
  I_x
\end{math} related to state \textit{x} will contain following items:\\
\centerline{\begin{math}
  [B \rightarrow .C_t\beta, j] \in I_y^A
\end{math}}
\centerline{\begin{math}
  [C_t \rightarrow .C_t\#_t, d] \in I_y^A
\end{math}}
\centerline{\begin{math}
  [C_t \rightarrow .A\gamma, e] \in I_y^A
\end{math}}
\centerline{\begin{math}
  [A \rightarrow .b\beta, f] \in I_y^A
\end{math}}\\
\indent After applying TO operation to \begin{math}
  I_x
\end{math}, the TO fields of all JSR items for above LR(1) items will be \begin{math}
  S^t
\end{math}.\\
Let \begin{math}
  I_y
\end{math} be the state reached after performing shift on \textit{b} in the state \begin{math}
  I_x
\end{math}. So,
\centerline{\begin{math}
  I_y = GOTO (I_x, b) = CLOSURE (\{[A \rightarrow b.\beta, g]~|~\forall~g \in FIRST (\beta f)~\})
\end{math}}
\indent Also, the TO and FROM fields of all the JSR items for the above set of LR (1) items will be \begin{math}
  S^t
\end{math}.\\
\indent Hence, \begin{math}
  I_y = I_n
\end{math}. This shows that the states \textit{y} and \textit{n} are same. Let us name the state as \textit{q}. Moreover, the TO fields of LR(1) items of \begin{math}
  I_y
\end{math} and \begin{math}
  I_n
\end{math} are \begin{math}
  S^t
\end{math}. Hence, in this case Shift on terminal \textit{b} in the states \textit{m} and \textit{x} results only in one state \textit{q} in the subtable of \begin{math}
  C_t
\end{math}.\\
Theorem is proved in this case.\\
\\
\indent As Theorem has been proved in \textit{Case 1.1} and \textit{Case 1.2}. So, for \textbf{Case 1} also the Theorem has been proved.\\
\indent We have proved for the case containing both productions. Two other cases are when only one of these productions is present. The proof of both of these cases are very similar to the \textbf{Case 1}.\\
\indent Please note that with the given set of conditions in the Theorem, we couldn't have the case in which none of the productions belong to this set. 
\end{proof}

THEOREM 1 is crucial to BPP. In the succeeding sections we will use this theorem to create our parallel parsing algorithm.
\section{Block Parallelized Parser}
In this section we will present our Block Parallelized Parser for LR (1) grammars. We will first give the intuition and working of BPP. Then we will present our algorithm and give a proof that our parser can accept all the LR(k) and LALR (k) languages which can be accepted by a Shift Reduce LR(k) and LALR (k) parser.\\
\indent Let \begin{math}
  G' = (N', T', P', S')
\end{math} be the augmented grammar. The incremental categories are the Non Terminals associated with the blocks to be parsed in parallel. For example, if we want to parse class definitions in parallel then we can define \textit{class-definition} as the incremental category. Other examples can be \textit{function-definition}, \textit{if-statement}, \textit{for-loop}, \textit{while-loop} if they can be parsed in parallel. For most of the programming languages including C, C++, Java, C\# above blocks can be parsed in parallel. In general we can define an incremental category to be the non terminals which derive a string containing the start of the block symbol, \begin{math}
  s_b
\end{math} and ending with the end of the block symbol \begin{math}
  e_b
\end{math}. In mathematical terms, for BPP the set of incremental category \textit{IC} is defined as:
\centerline{\begin{math}
  IC = \left \lbrace C_t~|~if~C_t \rightarrow \alpha X~\in P'~then~C_t \overset{\ast}{\Rightarrow}\alpha s_b \beta e_b \right\rbrace 
\end{math}}\\
In the C Programming Language, a \textit{function-definition} can be represented by the following context free grammar productions:\\
\begin{math}
  function\mbox{-}definition \rightarrow type~name~\left( arguments \right)~block
\end{math}\\
\begin{math}
  block \rightarrow s_b~statement^\ast~e_b
\end{math}\\
\begin{math}
  statement \rightarrow type~name~;
\end{math}\\
\begin{math}
  statement \rightarrow if\mbox{-}stmt
\end{math}\\
\begin{math}
  statement \rightarrow for\mbox{-}loop
\end{math}\\
\begin{math}
  statement \rightarrow while\mbox{-}loop;
\end{math}\\
\begin{math}
  if\mbox{-}stmt \rightarrow \textbf{if}~\left(expression\right)~block
\end{math}\\
\begin{math}
  while\mbox{-}loop \rightarrow \textbf{while}~\left(expression\right)~block
\end{math}\\
\begin{math}
  for\mbox{-}loop \rightarrow \textbf{for}~\left(expression;~expression;~expression\right)~block
\end{math}\\
\indent According to the definition of Incremental Categories above, we can see that \textit{function-definition}, \textit{if-stmt}, \textit{while-loop}, \textit{for-loop} follows the condition for an incremental category.\\
\indent In a programming language there could be many types of blocks like in Java, a block related to the class definition may contain only method definitions and declaration of member variables while a block related to the method definition would contain statements including expression statements, variable declarations, if statement or loop etc. This means in a programming language not all blocks contains same type of statements. Hence, encountering the start symbol of block doesn't give us enough information about what kind of statements the block would contain. To overcome this problem we will modify the productions of incremental category such that a reduce action will happen when the start symbol is encountered. Modify the productions of \begin{math}
  C_t \in IC
\end{math} in such a way that every production \begin{math}
  C_t \rightarrow \alpha _t s_b X_t e_b
\end{math} is split into two productions:\\
\centerline{\begin{math}
  C_t \rightarrow A_t s_b X_t e_b
\end{math}}
\centerline{\begin{math}
  A_t \rightarrow \alpha _t
\end{math}}
\indent If the productions of incremental category \begin{math}
  C_t
\end{math} is structured as above then during the parsing of a word related to this incremental category there will be reduce action to reduce the current symbols to production \begin{math}
  A_t \rightarrow \alpha _t
\end{math} when \begin{math}
  s_b
\end{math} becomes the look ahead symbol. As each \begin{math}
  A_t
\end{math} is related to only one \begin{math}
  C_t
\end{math} and vice-verse, we can easily determine which incremental category has to be parsed next. \\
\indent Now we are in a stage to define Block Parallelized Grammar. Given a grammar \begin{math}
  G = (N, T, P, S)
\end{math} and a set of incremental categories \\ \centerline{\begin{math}
  IC = \left \lbrace C_t~|~C_t\in \alpha s_b X e_b~\in~P'~and~C_t \overset{\ast}{\Rightarrow} \alpha s_b \beta e_b \right \rbrace
\end{math}} we define Block Parallelized Grammar \begin{math}
  G^P = (N^P, T, P^P, S)
\end{math} such that\\
\centerline{\begin{math}
  P^P = P \cup \left \lbrace C_t \rightarrow A_t s_b X_t e_b,~A_t \rightarrow \alpha _t~|~C_t \rightarrow \alpha _t X_t \in P \right \rbrace - \end{math}} 
  \centerline{\begin{math}\left \lbrace C_t \rightarrow \alpha _t X_t~|~C_t \rightarrow \alpha _t X_t \in P \right \rbrace
\end{math}}
\centerline{\begin{math}
  N^P = \left \lbrace A_t~|~C_t \rightarrow A_t s_b X_t e_b,~A_t \rightarrow \alpha _t~\forall~C_t \rightarrow \alpha _t X_t \in P \right \rbrace  
\end{math}}
and \begin{math}
  N^P
\end{math} is partitioned using FIRSTNT as given in Section 7.\\
\indent Now we can use THEOREM 1 to create BPP. Let us have a string \begin{math}
  w = "\omega a \delta s_b \eta e_b \mu" 
\end{math} where \begin{math}
  \omega,~\delta,~\eta,~\mu \in T^\ast,~A_t \overset{\ast}{\Rightarrow} a\delta
\end{math} and \begin{math}
  X_t \overset{\ast}{\Rightarrow}s_b \eta e_b
\end{math}. During the parsing \textit{w}, when \begin{math}
  s_b
\end{math} is encountered we should have a reduce action to \begin{math}
  A_t
\end{math}, based on the production \begin{math}
  A_t \rightarrow \alpha _t
\end{math}. Now, we can get \begin{math}
  C_t
\end{math} associated with \begin{math}
  A_t
\end{math}. According to THEOREM 1, during parsing of the word \textit{w} if the parser reaches at state \textit{q} in the sub table of \begin{math}
  C_t
\end{math} after performing shift action on \textit{a} then during the incremental parsing of the word \begin{math}
  w_t = "a\delta s_b \eta e_b"
\end{math} for the incremental category \begin{math}
  C_t
\end{math} the parser will also reach the state \textit{q} in the sub table of \begin{math}
  C_t
\end{math} after performing the shift action on \textit{a}. This means, we can replace the state reached just before performing shift action on \textit{a} with the start state of subtable of \begin{math}
  C_t
\end{math} and \begin{math}
  w_t
\end{math} can now be parsed incrementally. \\
\indent It is now evident that why the partitions of Non Terminals
should be created as described in Section 7. If the partitions are not created in this way, then it may happen after a shift on \textit{a} during the parsing of \textit{w} and incremental parsing of \begin{math}
  w_t
\end{math} may reach the same state but not in the same sub-table. By creating partitions as described in Section 8, we make sure that
when \begin{math}
  s_b
\end{math} is encountered by BPP then the newly created parallel parser knows in which sub-table it has to continue the parsing in. On the other hand if partitions are not created as described in Section 8, then newly created parallel parser wouldn’t know in which sub-table it has to continue parsing in.\\
\indent This property is used by BPP to parse the block incrementally. Algorithm 1 is the BPP parsing algorithm of incremental table \begin{math}
  S^t
\end{math}. If \textit{t} = 1, then the algorithm is for the Top Level Block. Otherwise it is for other incremental category.\\
\begin{algorithm}
\caption{Block Parallelized Parsing Algorithm}
\begin{algorithmic}[1]
\State $a = start~symbol~of~input$
\State $\$ = last~symbol~of~input$
\State $h = initial~parsing~subtable$
\State $stack = stack~of~states$
\While {$true$}
\State $s = stack.top ()$
\If {$Tab (S^h) [s, a] == shift~t$}
\State $stack.push (t)$
\State $a = next~symbol~of~input$ 
\ElsIf {$Tab (S^h)[s, a] == jump~k$}
\State $h = k$
\ElsIf {$Tab (S^h)[s, a] == reduce~A\rightarrow \beta$}
\State $b = a$
\If {$a == s_b~and~Tab(S^h)[s, a] == reduce~A_t~\rightarrow \alpha _t$}
\State $get~C_t~related~to~A_t$
\State $stack_t = new~stack~of~states$
\State $stack_t.push (start~state~of~C_t)$
\State $pop~|\alpha _t|~states~from~stack_t~and~push~them~to~stack_t$
\State $create~new~block~parser~related~to~C_t~with~stack_t$
\State $start~new~block~parser$
\State $go~to~the~end~of~this~block$
\State $a = C_t$
\Else
\State $pop~|\beta|~states~from~stack$
\State $t = stack.top ()$
\If {$Tab (S^h)[t, A] == shift~p$}
\State $stack.push (p)$
\EndIf
\EndIf
\ElsIf {$Tab (S^h)[s, a] == accept$} \Return
\Else 
\State $error$
\EndIf
\EndWhile
\end{algorithmic}
\end{algorithm}
This parsing algorithm is for any block be it top level block or child block. Lines 1-4 initializes different local variables. Lines 5-32 is the main loop of algorithm which does the whole work. Line 6, gets the top state on the stack. Lines 7-9 pushes the next state on the stack if there is a shift operation. Similarly, lines 10-11 changes current table if there is a jump operation. Line 12-27 are executed if there is reduce action which reduces according to the production \begin{math}
  A \rightarrow \beta
\end{math}. Line 14 checks if current input symbol is a start of the block symbol and if reduce action reduces \begin{math}
  A_t \rightarrow \alpha _t
\end{math} for an incremental category \begin{math}
  C_t
\end{math}. If yes then Lines 15-22 gets \begin{math}
  C_t
\end{math} related to \begin{math}
  A_t
\end{math}, pops \begin{math}
  |\alpha _t|
\end{math} states from stack and pushes these states and start state to a new stack, creates and starts a new BPP for \begin{math}
  C_t
\end{math} and shifts to the end of block. In this case next symbol will become \begin{math}
  C_t
\end{math}. If check of Line 14 fails then it means this is a regular reduce action not associated with any block. Lines 24-27, pops \begin{math}
  |\beta|
\end{math} states from stack and shifts to a new state. Line 28 returns if there is an accept action. Accept action can be for both Top Level block and Child Block. Line 30 reports error if none of the above cases are satisfied. \\
\indent Fig. 2 shows an example of how BPP works. It will start parsing the block of function \textit{f}. When it will encounter an \textit{if} block a new BPP in another thread will be created which will parse \textit{if} block. Parent BPP will move ahead to the end of \textit{if} block and will also create another thread to parse \textit{else} block. In this way input is divided into different threads parsing each block. \\
\begin{figure}[t!]
\includegraphics[width=85mm]{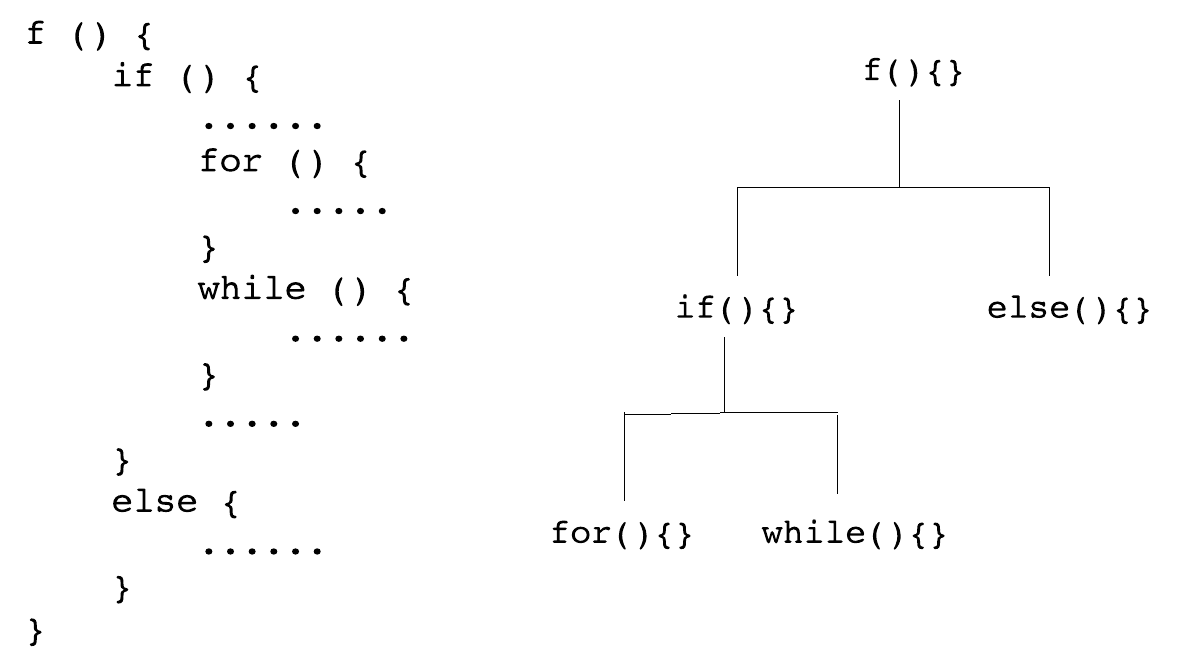}
\caption{Example of BPP parsing source code}
\end{figure}
\indent In this algorithm we have tried to minimize the
amount of serial work to be done to get to the end of the block for
the new block parser. One BPP doesn’t have to do any communication with other BPPs. Also, there are no side effects of the above algorithm. All the variables which are being modified are local variables. Hence, there is no need of synchronization. This also reduces any amount of cache contention between different processors. Generation of Abstract Syntax Tree or Parsing Tree is easy using above algorithm and it requires very little change in the above algorithm. \\
\indent It may be argued that the step “go to the end of this block” is a serial bottleneck for the parallel algorithm. \cite{hillis} describes an algorithm to perform lexical analysis of string in \textit{O(log n)} time using \textit {O(n)} processors in a parallel fashion. When performing lexical analysis in parallel as described in \cite{hillis}, lexer could store the start symbols of a block with its corresponding end symbol for that block. Now determining the end of block is just a matter of searching through the data structure. Many ways exist to make this searching as fast as possible like using a Binary Search Tree or a Hash Table.
\section{Comparison with other Parallel Parsing Algorithms}
In this section we will show how our technique is better than other
techniques. \cite{khanna} developed a technique which divides whole
grammar into n sub- grammars which are individually handled by
n sub-compilers. Each sub-compiler needs its own scanner which
can scan a sub-grammar. It requires an automatic tool to generate
sub-compiler. This technique requires significant changes in not
only in Parser and Grammar but also in Lexical Analyzer phase.
Contrast to this our Block Parallelized Parser is easy to generate
as our technique does not require any change in the grammar and
lexical analyzer and it is pretty easy to modify current YACC and
Bison Parser Generator tools to support the generation of our
parser.\\
\indent LR substring parsing technique described in \cite{clarke} is specifically
for Bounded Context – (1, 1) grammars. There are no limitations
like this to Block Parallelized Parser. Although in this paper we
have shown how we can create an LR (1) Block Parallelized
Parser but we believe it can be extended to LR (k) class of
languages and also could be used by LALR (1) parser. Hence, our
technique accepts a larger class of grammars.\\
\indent \cite{cohen}, \cite{schell}, \cite{fischer}, \cite{ligett} all develops algorithm for parsing LR (k)
class of languages in parallel. These techniques and in all other
techniques the creation of Abstract Syntax Tree is not as easy as itis in our technique. Moreover our technique is simpler than all
others.\\
\indent Hence, we could see that Block Parallelized Parser is easy to
construct, accepts wider class of languages and supports an easy
construction of Abstract Syntax Tree.
\section{Implementation and Evaluation}
We implemented Lexer, Block Parallelized Parser and Shift Reduce LR (1) parser for C Programming Language supporting a few GNU C extensions required for our tests. Implementation was done in C\# Programming Language. To simplify our implementation we only included function-definition as the Incremental Category for BPP. Moreover, function-definition would still give us a sufficient amount of parallelism as we would see in the evaluation. We modified the Lexer phase so that it will
keep track of the position of s b and its corresponding e b . This information was stored in the form of a C\# Dictionary (which is implemented as a Hash Table) with the position of s b as the key and position of e b as the value. As, thread creation has significant overhead so we used C\# TaskParallel Library which is Thread Pool implementation in C\#. Our implementation doesn’t have a C preprocessor implementation. So, we first used gcc to perform preprocessing and the preprocessed file is used as input to our implementation.\\
\indent We evaluated the performance of BPP with Shift Reduce LR
(1) parser by parsing 10 random files from the Linux Kernel source code. We compiled C\# implementation using Mono C\# Compiler 3.12.1 and executed the implementation using Mono JIT Compiler 3.12.1 on machine running Fedora 21 with Linux Kernel 3.19.3 with 6 GB RAM and Intel Core i7-3610 CPU with HyperThreading enabled.\\
\indent In C Programming Language, preprocessing \texttt{\#include} could
actually generate very long files. Normally, the header files contains declarations not function definitions. So, this leads to less amount of parallelism being available. Hence we decided to evaluate with including header files and excluding header files. Fig. 3 shows the performance improvement with respect to Shift Reduce LR(1) parser of 10 random Linux Kernel files. Fig. 3 shows performance improvement for both cases including the header files and excluding the header files. As expected we could see that the performance improvement with excluding the header files is more than the performance improvement including the header files.\\
\begin{figure}[t!]
\includegraphics[width=85mm]{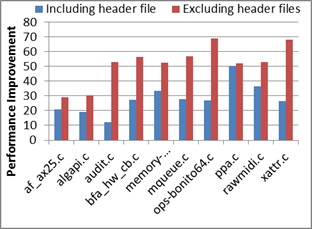}
\caption{Results of parsing 10 Linux Kernel Source Files.}
\end{figure}
\indent The performance improvement in the case of excluding header
files matters most for the programming languages like Java,
Python, C\# where most of the program is organized into blocks
the results because in these programs the amount of parallelism
available is high.\\
\indent The average performance improvement in the case of excluding
header files is 52\% and including header files is 28\%.
\section{Conclusion}
In this document we present our Block Parallelized Parser
technique which could parse the source code in a parallel fashion.
Our approach is a divide and conquer approach in which, the
divide step divides the source code into different blocks and parse
them in parallel whereas the conquer step only waits for all the
parallel parsers to complete their operation. It is based on the
Incremental Jump Shift Reduce Parser technique developed by
\cite{degano}. Our technique doesn’t require any communication in
between different threads and doesn’t modify any global data.
Hence, this technique is free of thread synchronization. We
develop this technique for LR (1) languages and we believe that it
can be extended to accept LR (k) languages and could be
converted to an LALR (1) parser easily. Our technique doesn’t do
any major changes in the parsing algorithm of a Shift Reduce
Parser hence the Abstract Syntax Tree can be created in the same
way as it has been creating in Shift Reduce Parser. Moreover, our
parser can also work as an Incremental Block Parallelized Parser.
We implemented Block Parallelized Parser and Shift Reduce LR
(1) Parser for C Programming Language in C\#. The performance
evaluation of BPP with Shift Reduce LR (1) parser was done by
parsing 10 random files from the Linux Kernel source code. We
compiled C\# implementation using Mono C\# Compiler 3.12.1 and
executed the implementation using Mono JIT Compiler 3.12.1 on
machine running Fedora 21 with Linux Kernel 3.19.3 with 6 GB
RAM and Intel Core i7-3610 CPU with HyperThreading enabled.
We found out that our technique showed 28\% performance
improvement in the case of including header files and 52\%
performance improvement in the case of excluding header files.
\section{Future Work}
Our parser accepts LR (1) languages we would like to extend it to
accept LR (k) languages. In our technique, the parser determines
when to create a new parallel parser thread. If the responsibility of
this decision can be given to the lexical analysis phase then the
lexical analysis can actually start the parsers in parallel. This will
lead to significant performance advantage. Moreover, our
technique has been applied to languages which doesn’t have
indentation in their syntax like the way Python has. \cite{adams} shows an
efficient way to parse the language which has indentation as a
mean to determine blocks. Our parser can be extended to accept
those languages also. We are working towards developing a Block
Parallelized Compiler which could compile different blocks of a
language in parallel. Block Parallelized Parser is one of the
components of a Block Parallelized Compiler. Semantic Analysis
phase also share the same properties as the Syntax Analysis phase.
In Programming Languages, an entity like variable or type is
declared before using it. So, in this case also a lot can be done to
actually parallelize the semantic analysis phase.

\end{document}